\newcommand*{\nolink}[1]{%
#1
}
\def\mybs{\char092}
 \newtheorem{lemma}{Lemma}
 \newtheorem{notation}{Notation}
 \newtheorem{theorem}{Theorem}
 \newtheorem{example}{Example}
 \newtheorem{definition}{Definition}
 \newtheorem{remark}{Remark}
\title{Validating Back-links of FOL$_{\mbox{\normalsize ID}}$ Cyclic Pre-proofs\\
}
\author{Sorin Stratulat
\institute{
Université de Lorraine, CNRS, LORIA\\
F-57000 Metz, France}
\email{sorin.stratulat@univ-lorraine.fr}
}
\begin{document}

\maketitle

\begin{abstract} 

  Cyclic pre-proofs can be represented as sets of finite tree
  derivations with back-links. In the frame of the first-order logic
  with inductive definitions (FOL$_{\mbox{\scriptsize ID}}$), the
  nodes of the tree derivations are labelled by sequents and the
  back-links connect particular terminal nodes, referred to as
  \emph{buds}, to other nodes labelled by a same sequent. However,
  only some back-links can constitute sound pre-proofs. Previously, it
  has been shown that special ordering and derivability conditions,
  defined along the minimal cycles of the digraph representing a
  particular normal form of the cyclic pre-proof, are sufficient for
  validating the back-links. In that approach, a same constraint could
  be checked several times when processing different minimal cycles,
  hence one may require additional recording mechanisms to avoid
  redundant computation in order to downgrade the time complexity to
  polynomial.

  We present a new approach that does not need to process minimal
  cycles. It based on a normal form that allows to define the
  validation conditions by taking into account only the root-bud paths
  from the non-singleton strongly connected components of its digraph.

\end{abstract}

\section{Introduction}
 
In~\cite{Brotherston:2005qy,Brotherston:2006uy,Brotherston:2011fk},
Brotherston and Simpson introduced the notion of cyclic (pre-)proof in
the frame of first-order logic with inductive definitions (for short
FOL$_{\mbox{\scriptsize ID}}$ and detailed, e.g.,
in~\cite{Aczel:1977fk}) and including equality. In this setting, the
cyclic \emph{pre-proofs} are sequent-based proof derivations usually
presented in the form of finite trees. Some of their terminal nodes,
called \emph{buds}, are labelled by `not-yet proved' sequents that
already labelled other nodes, called \emph{companions}. For each bud
there is only one companion and the bud-companion relations are
referred to as \emph{back-links}.

Not all back-links may constitute sound pre-proofs. Indeed, a
pre-proof can be constructed for any false sequent $S$ by applying a
stuttering inference step\footnote{for example, by applying the LK's
  $(Subst)$ rule with an identity substitution (see the definition of
  $(Subst)$ in Definition~\ref{def:trace}).} that creates a copy of
$S$. This terminal node is a bud whose companion is the root of the
pre-proof. \cite{Brotherston:2006uy,Brotherston:2011fk} also
introduced the CLKID$^{\omega}$ inference system for building cyclic
pre-proofs and defined a sufficient criterion for checking their
soundness in terms of a \emph{global trace condition}. This condition
is an $\omega$-regular property that can be checked as an inclusion
between two Büchi automata. The inclusion test includes an automata
complementation procedure~\cite{Kupferman:2001aa} whose time
complexity is exponential in the number of states of the automaton to
be complemented.

A more effective soundness criterion was
given in~\cite{Stratulat:2017ac} for pre-proofs generated by
CLKID$^{\omega}_N$, a restricted version of CLKID$^{\omega}$.  Inspired from a previous
method~\nolink{\cite{Stratulat:2012uq,Stratulat:2017aa}} for checking
the soundness of cyclic proofs built using the Noetherian induction
principle for reasoning on conditional specifications, its time
complexity can be downgraded to polynomial. To do this, a
CLKID$^{\omega}_N$ pre-proof is normalised to some set of finite tree
derivations which can be represented as a directed graph (for short, digraph)
having some arrows labelled by substitutions. The soundness criterion
asks that some derivability and ordering
constraints hold along the paths
leading root nodes to bud nodes in the \emph{minimal} cycles of the digraph, i.e.,
cycles that do not include other cycles.

In general, the number of minimal cycles in a digraph with $n$ nodes
can be much greater than the number of its buds (which is always
smaller than $n$). For complete digraphs, i.e., digraphs for which
every pair of distinct nodes is connected by arrows in the two ways,
one can define the number of minimal cycles built by $k\in[2..n]$
nodes, as follows. We take one of the $n$ nodes as the starting node
in the cycle, then the next one from the remaining $n-1$ nodes, and so
on for $k-1$ times. So there are
$n\times (n-1)\times\ldots\times (n-k+1)$ ways to do it. Since the
cycle consisting of the $k$ nodes can be built in $k$ different times,
depending which is the starting node among its nodes, this number
is $\frac{n!}{(n-k)!k}$.  Hence, the total number of minimal
cycles in a complete digraph with $n$ nodes is

\[\sum_{k=2}^n \frac{n!}{(n-k)!k}\]

Fortunately, the number of arrows in any digraph built with the
approach from~\cite{Stratulat:2017ac} is smaller than that for the
complete digraphs because each bud node has only one
companion. However, a ordering-derivability constraint can be checked
several times as it may be defined w.r.t. different minimal
cycles. In~\cite{Stratulat:2017ac}, it was already suggested that
their number can be reduced to the number of buds from the minimal
cycles, hence smaller than $n$. This redundancy can be avoided, for
example, by using recording mechanisms.

In this paper, we present an improved version of the soundness criterion for
validating CLKID$^{\omega}_N$ pre-proofs. The advantage is that the
computation of minimal cycles is not needed and there is no redundancy
in the computation of the ordering-derivability constraints. In order
to do this, we propose a new normal form of CLKID$^{\omega}_N$
pre-proofs and define ordering and derivability constraints for every
root-bud path that occurs in a non-singleton strongly connected
component (SCC) of the digraphs associated to the new normal forms. We
show that the number of constraints is
that of the buds from the non-singleton
SCCs. 

The rest of the paper is structured as
follows. Section~\ref{sec:sequent} gives a brief presentation of
FOL$_{\mbox{\scriptsize ID}}$ and
CLKID$^{\omega}_N$. Section~\ref{sec:proofs} introduces the soundness
criterion for CLKID$^{\omega}_N$ pre-proofs by detailing the
normalisation procedure, the digraph construction and the
definition of the ordering and derivability conditions. A comparison
is made with the soundness criterion
from~\cite{Stratulat:2017ac}. The conclusions and future work are
given in the last section.

\section{Induction-based sequent calculus}
\label{sec:sequent}

\noindent\textbf{Syntax.} The logical setting is that presented
in~\cite{Brotherston:2011fk},
based on FOL$_{\mbox{\scriptsize ID}}$ with equality using a standard
(countable) first-order language $\Sigma$. The predicate symbols are
labelled either as \emph{ordinary} or \emph{inductive}, and we assume
that there is an arbitrary but finite number of inductive predicate
symbols. 
The terms are defined as usual. By $\overline{t}$,
we denote a vector of terms ($t_1, \ldots, t_n$) of length $n$, the
value of $n$ being usually deduced from the context.

New terms and formulas are built by instantiating variables by terms
via substitutions. A \emph{substitution} is a mapping from variables
to terms, of the form $\{x_1\mapsto t_1; \ldots ;x_p \mapsto t_p\}$,
for some $p>0$, which can be
written in a more compact form as $\{\overline{x} \mapsto
\overline{t}\}$, where $\overline{x}\equiv (x_1,\ldots,x_p)$, 
$\overline{t}\equiv (t_1,\ldots,t_p)$, and $\equiv$ is the syntactic
equality.
The \emph{composition} of $\sigma_1$ with
$\sigma_2$ is denoted by $\sigma_1\sigma_2$, for all substitutions
$\sigma_1$ and $\sigma_2$. 
A term $t$ is an \emph{instance} of $t'$, or $t$ \emph{matches} $t'$, if
there is a substitution $\sigma$ such that $t\equiv t'\sigma$.   
Similarly, the notion of matching 
can be
extended to vector of terms,
atoms, and formulas. For any substitution
$\sigma$ 
applied to a formula $F$, we use the notation $F[\sigma]$ instead of
$F\sigma$.

\centerline{}\noindent\textbf{Deductive sequent-based inference rules.} The proof
derivations are built from sequents~\cite{Gentzen:1935fk} of the form
$\Gamma \vdash \Delta$, where $\Gamma$ and $\Delta$ are finite
multisets of formulas called \emph{antecedents} and \emph{succedents},
respectively. $FV(\Gamma \vdash \Delta)$ denotes  its set of
free variables. An inference rule is represented by a horizontal line
followed by the name of the rule. The line separates the lower sequent, called
\emph{conclusion}, from a (potentially empty) multiset of upper sequents,
called \emph{premises}. Most of the rules \emph{introduce} an
explicitly represented
formula from the conclusion, called \emph{principal} formula. In this
case, the rules are annotated by $L$ (resp., $R$) if the rule is
introduced on the left (resp, right) of the $\vdash$ symbol from the conclusion.

A specification is built from a finite inductive definition set of
axioms $\Phi$ consisting of formulas of the form
\begin{eqnarray}
\label{eq:implication}\bigwedge_{m=1}^hQ_{m}(\overline{u}_{m})\wedge \bigwedge_{m=1}^l P_{m}(\overline{t}_{m}) \Rightarrow P(\overline{t}),&
\end{eqnarray}
\noindent where $h, l$ are
naturals, $Q_{1},\ldots, Q_{h}$ are ordinary predicate symbols,
$P_{1},\ldots,P_{l}, P$ are inductive predicate symbols.
$\bigwedge_{m=1}^0$ is a shortcut for the `true' boolean
constant and can be ignored.

The deductive
 part of the sequent-based reasoning about
 FOL$_{\mbox{\scriptsize ID}}$ is performed using the Gentzen's LK
 rules~\cite{Gentzen:1935fk} and an `unfold' 
 rule. The unfold rule $(R.(rname))$ replaces an atom $P(\overline{t}')$
using the axiom $(rname)$ defining $P$. E.g., 
(\ref{eq:implication}) can be applied on $\Gamma\vdash
P(\overline{t}'),\Delta$ if $P(\overline{t}')\equiv
P(\overline{t})[\sigma]$ for some substitution $\sigma$, as:
\begin{center}
    \includegraphics[width=0.9\linewidth]{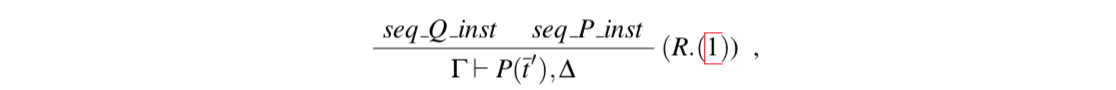}
  \end{center}

\noindent where $seq\_Q\_inst$ (resp., $seq\_P\_inst$) is the multiset of sequents
$\bigcup_{m=1}^h\{\Gamma\vdash Q_{m}(\overline{u}_{m})[\sigma],\Delta\}$
\sloppy{(resp., $\bigcup_{m=1}^l\{\Gamma\vdash
  P_{m}(\overline{t}_{m})[\sigma],\Delta\}$)}. 

\centerline{}\noindent \textbf{Semantics.} The standard interpretation of inductive
predicates is built from prefixed points of a monotone operator issued
from the set of axioms representing
$\Phi$~\cite{Aczel:1977fk}. Its least prefixed point, approached by an
iteratively built \emph{approximant} sequence, helps defining a
\emph{standard model} for $(\Sigma, \Phi)$ (see, e.g.,~\cite{Brotherston:2011fk} for
details).

\begin{definition}[validity]
\label{def:seq-validity}
Let $M$ be a standard model for $(\Sigma,\Phi)$, $\Gamma\vdash\Delta$
a sequent and $\rho$ a valuation which interprets in $M$ the
variables from $FV(\Gamma \vdash \Delta)$. We write $\Gamma
\models_{\rho}^M\Delta$ if every $G\in\Gamma$
holds in $M$ there is some
$D\in\Delta$ that also holds in $M$. We say that $\Gamma\vdash\Delta$ is
$M$-\emph{true} and write $\Gamma \models^M\Delta$ if
$\Gamma\models_{\rho}^M\Delta$, for any $\rho$.
\end{definition}

\noindent When $M$ is
  implicit from the context,  we use \emph{true} instead of
  $M$-true. 
A rule is \emph{sound}, or preserves the
validity, if its conclusion is true whenever its premises are
true. Hence, the conclusion of every $0$-premise sound rule is true. 
 
\subsection{The CLKID$_N^ {\omega}$ cyclic inference system}

CLKID$^{\omega}$~\cite{Brotherston:2011fk} includes the LK rules, the
rules from Figure~\ref{fig:eq} that process equalities, the `unfold'
rule and the $(Case)$ rule which represents a left-introduction
operation for inductive predicate symbols:

\begin{center}
    \includegraphics[width=0.9\linewidth]{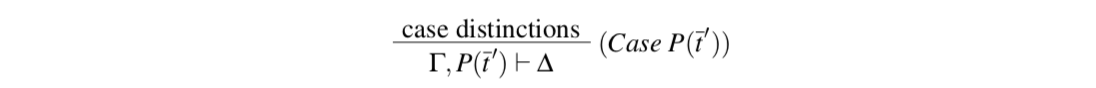}
  \end{center}

\noindent
For each axiom of the
form~(\ref{eq:implication}),
\begin{eqnarray}
\label{eq:case}
\Gamma, \overline{t}'=\overline{t},Q_{1} (\overline{u}_{1}),\ldots,
  Q_{h}(\overline{u}_{h}),P_{1}(\overline{t}_{1}),\ldots,
  P_{l}(\overline{t}_{l})\vdash\Delta&
\end{eqnarray}
\noindent is the \emph{case distinction}  for which each free variable $y$ from (\ref{eq:implication}) is fresh
w.r.t. the \emph{free variables} from the conclusion of the rule ($y$ can be renamed to a fresh variable, otherwise). $P_{1}(\overline{t}_{1}),\ldots,
  P_{l}(\overline{t}_{l})$ are \emph{case
  descendants} of $P(\overline{t}')$. 

\begin{figure*}[!ht]

  \begin{center}
    \includegraphics[width=0.9\linewidth]{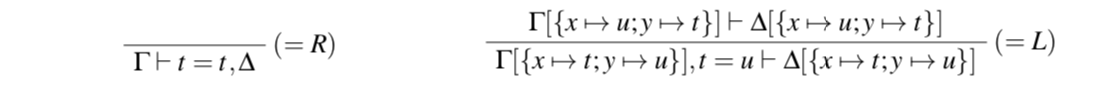}
  \end{center}

\caption{\label{fig:eq} Sequent-based rules for equality reasoning.}
\end{figure*}

The inference system \label{def:clkida} CLKID$_N^{\omega}$, introduced
in~\cite{Stratulat:2017ac}, is the restricted version of
CLKID$^ {\omega}$ for which $(=L)$ is
replaced by the
generalization rule $(Gen)$ that substitutes a term by a variable:

\begin{center}
    \includegraphics[width=0.9\linewidth]{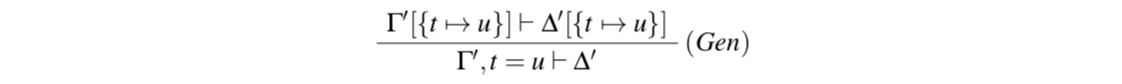}
  \end{center}

\noindent where $t$ is a free variable that does not occur in $u$. 

$(Gen)$ is the particular instance of $(=L)$ from
Figure~\ref{fig:eq} when $y\not\in FV(\Gamma\vdash\Delta)$ and
$t\not\in FV(\Gamma\vdash\Delta)$ that also does not occur in $u$. By
using the property that $\Phi[\{y\mapsto u\}]\equiv \Phi$, holding
whenever $y$ is a free variable not occurring in a formula $\Phi$, the
last condition can simplify $(=L)$ to a form equivalent to $(Gen)$:

\begin{center}
    \includegraphics[width=0.9\linewidth]{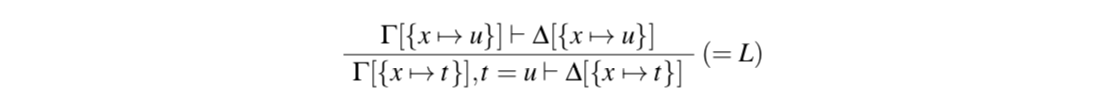}
  \end{center}

\centerline{}\noindent\textbf{CLKID$_N^{\omega}$ pre-proof trees.} A
\emph{derivation tree} for a sequent $S$ is built by successively applying
inference rules starting from $S$. We
consider only finite derivation trees whose terminal nodes can be
either leaves or buds. The \emph{leaves} are labelled by sequents that
represent conclusions of 0-premise rule, e.g., the unfold
rule using unconditional axioms. For each \emph{bud} there is
another node, called \emph{companion} and having the same
sequent labelling. The bud and its companion are annotated by the same sign, e.g.,
$\boldsymbol{\dag}$. In addition, the buds having a same companion
are labelled by the sign followed by a number that makes them unique, e.g.,
$\boldsymbol{\dag 1}$, $\boldsymbol{\dag 2}$,\ldots. A
\emph{back-link} is a relation bud-companion.

\begin{notation}[pre-proof tree, induction function for tree]
\label{def:tpreproof}
The pair ($\cal D$, $\cal R$) denotes a \emph{pre-proof tree}, where
$\cal D$ is a finite derivation tree and $\cal R$ is a defined
\emph{induction function} assigning a companion to every bud in
$\cal D$.
\end{notation}


\begin{example}\label{ex:preproof} To highlight the changes w.r.t. \cite{Stratulat:2017ac}, we
  take the same running example (also presented in~\cite{Brotherston:2012fk}). Let $N$ and $R$ be two inductive predicates defined by:

  \begin{tabular}{cc}
    \begin{minipage}[b]{.4\linewidth}
      {\begin{align}
        \label{n0} \Rightarrow N(0) & \\
        \label{n1} N(x)\Rightarrow N(sx)&
      \end{align}}
    \end{minipage}
                                          &
\begin{minipage}[b]{.4\linewidth}
  {\begin{align}
    \label{r0} \Rightarrow R(0,y) &\\
    \label{r1} R(x,0)\Rightarrow R(sx,0)&\\
    \label{r2} R(ssx,y) \Rightarrow R(sx, sy)&
  \end{align}}
\end{minipage}
  \end{tabular}
  
\noindent where the parentheses around the argument of $s$ are omitted. One can build the following pre-proof of $N(x),N(y)\vdash R(x,y)$:

\begin{center}
    \includegraphics[width=0.9\linewidth]{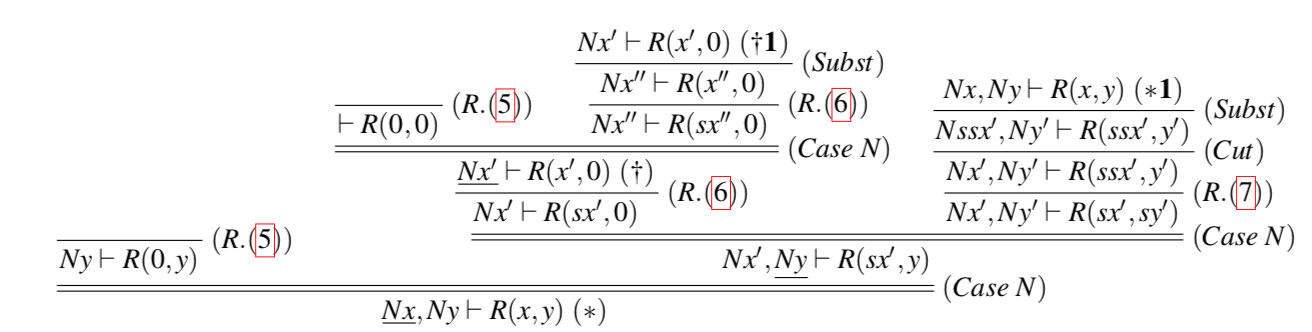}
  \end{center}

where the double line means that $(Gen)$ was applied after
$(Case)$ and the principal formulas of the $(Case)$ steps are
underlined. $(Cut)$ is applied as in~\cite{Brotherston:2012fk}. For
lack of space, the parentheses around the argument of $N$ are omitted.
\end{example}

We denote by $S(N)$ the sequent labelling any node $N$. A \emph{path}
is a list $[N^0, N^1, \ldots]$ of nodes in a pre-proof tree such that,
for all $i\geq 0$, $S(N^{i+1})$ is either one of the premises of the
rule applied on $S(N^i)$ if $N^i$ is not a terminal node, or
$S({\cal R}(N^i))$ if $N^i$ is a bud.

\begin{definition}[Trace, Progress point~\cite{Stratulat:2017ac}]
\label{def:trace}
Let ($\cal D$, $\cal R$) be a CLKID$^\omega_N$ pre-proof tree and let
$[N^0, N^1, \ldots]$ be one of its infinite paths and denoted by $l$. A \emph{trace}
following $l$ is a sequence $(\tau_i)_{i\geq 0}$ of inductive
antecedent atoms (IAAs) such
that, for all $i$, we have that $N^i$ is labelled by
$\Gamma_i\vdash \Delta_i$ and:
\begin{enumerate}
\item $\tau_i$ is some $P_{j_i}(\overline{t_i})\in\Gamma_i$;
\item if $\Gamma_i\vdash\Delta_i$ is the conclusion of $(Subst)$ then
  $\tau_i=\tau_{i+1}[\theta]$, where $\theta$ is the substitution
  used by 
the LK's $(Subst)$ rule defined as:
\begin{center}
    \includegraphics[width=0.9\linewidth]{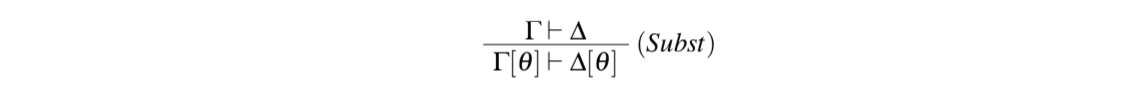}
  \end{center}

\item if $\Gamma_i\vdash\Delta_i$ is the conclusion of $(Gen)$ having
  $t=u$ as principal formula, there
  is a formula $F$ such that $\tau_i=F$ and $\tau_{i+1}=F[\{t\mapsto u\}]$;
\item if $\Gamma_i\vdash\Delta_i$ is the conclusion of a $(Case)$
  rule then either a) $\tau_{i+1}=\tau_i$, if $\tau_i$ is not the
  principal formula of the rule instance, or b) $\tau_i$ is the
  principal formula  and $\tau_{i+1}$ is a case
  descendant of $\tau_i$. In the latter case, $i$ is said to be a
  \emph{progress point} of the trace;
\item if $\Gamma_i\vdash\Delta_i$ is the conclusion of any other rule
  then $\tau_{i+1}=\tau_i$.
\end{enumerate}
\end{definition}

\begin{remark}
\label{rem:trace} Non-equality relations between (instances of) $\tau_i$ and
$\tau_{i+1}$ in the above definition are possible only when $i$ is a
progress point.
\end{remark}
\begin{remark}
  \label{rem:equivtrace} Condition 3 is an abbreviated form of the case
  dealing with $(=L)$ in Definition 5.4
  from~\cite{Brotherston:2011fk}, by applying the discussed restrictions to
  $(=L)$, i.e., if $\Gamma_i\vdash\Delta_i$ is the conclusion of $(=L)$, of
  the form
  $\Gamma [\{x\mapsto t; y\mapsto u\}],t=u\vdash \Delta [\{x\mapsto t;
  y\mapsto u\}]$ and having $t=u$ as principal formula, there is a
  formula $F'$ such that $\tau_i=F'[\{x\mapsto t; y\mapsto u\}]$ and
  $\tau_{i+1}=F'[\{x\mapsto u; y\mapsto t\}]$ under the following
  conditions:
  $y \not \in FV(\Gamma_i \backslash \{t=u\}\vdash\Delta_i)$, $t$ is a
  free variable not occurring in $u$ and
  $t\not\in FV(\Gamma\vdash\Delta)$.
\end{remark}

An \emph{infinitely progressing} trace is a trace with infinitely
many progress points.

\section{The criterion for validating the soundness of CLKID$_N^\omega$ pre-proofs} 
\label{sec:proofs}

The proof that some cyclic pre-proof is sound  is done by using a \emph{Descente Infinie}
argument. The general technique is to assume, by contradiction, that the
root of a pre-proof is labelled by
a false sequent. Then, we have to show that there is an infinite path
of nodes in the pre-proof for which there
is an infinite progressing trace following some tail of it. This means
that all successive steps in the tail are decreasing and the steps
corresponding to the progress points are strictly decreasing
w.r.t. some semantic ordering over ordinals. We get a contradiction
because it is not possible to built an infinite strictly decreasing
sequence of ordinals.

Since the inference rules are sound, an infinite path of nodes
labelled by false sequents should exist in the pre-proof whenever its
root sequent is false. A sufficient criterion for validating the
soundness of CLKID$^\omega$ pre-proofs is the \emph{global trace
  condition}~\cite{Brotherston:2005qy,Brotherston:2006uy,Brotherston:2011fk}:
for every infinite path, there is an infinitely progressing trace
following some tail. A different sufficient criterion for validating
the soundness of CLKID$_N^\omega$ pre-proofs was given
in~\cite{Stratulat:2017ac}; it defines ordering and derivability
conditions to be satisfied by the digraph representing some normal
form of the pre-proof. The normalisation procedure transforms the
pre-proof into a set of pre-proof trees, for short \emph{pre-proof
  tree-sets}, such that the root of the pre-proof is among the roots
of the trees from the normal form. If the sequent labelling the root
of the pre-proof is false, one can build an infinite path in the
digraph, whose nodes are labelled by false sequents and for which
there is an infinite progressing trace following some tail of it.

In the following, we present an
improved version of the criterion from~\cite{Stratulat:2017ac}.

\subsection{Normalising pre-proof trees}

The
normalisation process consists in the exhaustive application of the
following three operations. 
The first operation applies on an internal node labelled by some premise
of $(Subst)$, of the form
\begin{center}
    \includegraphics[width=0.9\linewidth]{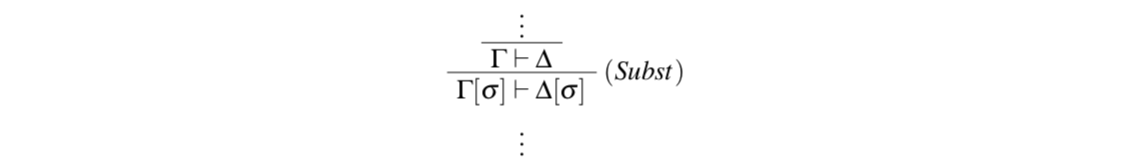}
  \end{center}

\noindent 
The result is displayed in Figure~\ref{fig:first-tr}.  The internal node is
duplicated and the subtree derivation rooted by it is detached
to become a new tree derivation. At the end, we get two distinct
pre-proof trees.  The two occurrences of the duplicated node establish
a new bud-companion relation.

\begin{figure}[!ht]
  \begin{center}
    \includegraphics[width=0.9\linewidth]{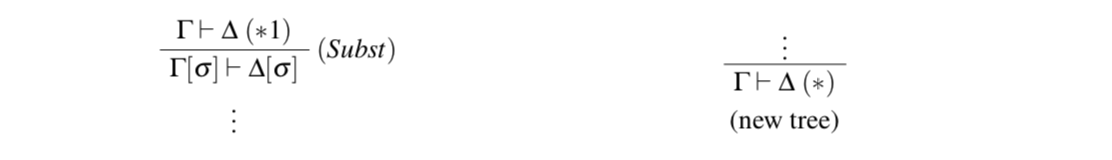}
  \end{center}
\caption{\label{fig:first-tr} The result of the first operation.}
\end{figure}

The second operation applies on a  non-root
companion  which is duplicated and the subtree derivation rooted
by it becomes
a new pre-proof tree.
The result is displayed in Figure~\ref{fig:second-tr}. The sequent
labelling the copy of the companion $(*)$ becomes the
conclusion of a new $(Subst)$ rule. The substitution used by the new
$(Subst)$ rule is chosen such that its
premise labels a new bud node labelled by the same sequent as the
conclusion, e.g., the \emph{empty} substitution. The new bud node will
have (*) assigned as companion.

\begin{figure}[!ht]
%
  \begin{center}
    \includegraphics[width=0.9\linewidth]{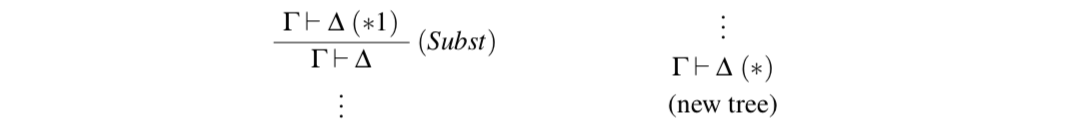}
  \end{center}

\caption{\label{fig:second-tr} The result of the second operation.}
\end{figure}

The last operation applies on a bud node labelled by some sequent that is the
premise of a rule $r$ different from $(Subst)$ such that

\begin{center}
    \includegraphics[width=0.9\linewidth]{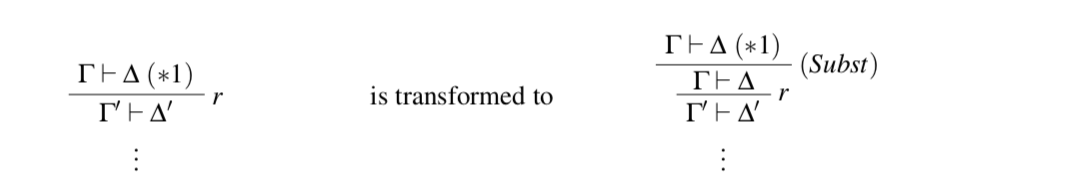}
  \end{center}

Let $(*)$ denote the companion of the bud node.  A new application of $(Subst)$ with the empty substitution was
performed on the bud sequent such that the node labelled by its
premise becomes the new bud node whose  companion is $(*)$.

Compared with the normalisation procedure
from~\cite{Stratulat:2017ac}, the two procedures  share only the
first operation. The procedure from~\cite{Stratulat:2017ac} also includes an operation that applies on
non-root companions but does not include the $(Subst)$-step from
Figure~\ref{fig:second-tr}. It does not have an equivalent transformation for the third
operation.

The following properties, related to the normalisation process and the
resulted normal form as given by
Lemmas~\ref{lem:normalise} and~\ref{lem:conservative},
are satisfied.

\begin{lemma}[termination]\label{lem:normalise} The normalisation process
  terminates.
\end{lemma}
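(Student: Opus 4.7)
The plan is to define a well-founded measure on pre-proof tree-sets and check that each of the three rewriting operations strictly decreases it. I would assign to each tree-set the lexicographic triple $(\mu_1,\mu_2,\mu_3)\in\mathbb{N}^3$, where $\mu_1$ is the total number of $(Subst)$-rule instances whose premise is an internal node (neither a leaf nor a bud), $\mu_2$ is the total number of companions that are not the root of their host tree, and $\mu_3$ is the total number of buds whose parent rule is different from $(Subst)$. The lexicographic order on $\mathbb{N}^3$ is well-founded, so it suffices to exhibit a strict decrease on each operation.

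Then I would go operation by operation. The first operation detaches the subtree rooted at an internal $(Subst)$-premise $n$ and turns $n$ into a bud; the triggering $(Subst)$ then has a terminal premise, so $\mu_1$ strictly drops by one. The second operation replaces a non-root companion $c$ by a $(Subst)$-step above a fresh bud and re-installs the copy $(*)$ as the root of a new tree; a non-root companion thereby disappears, replaced by a root companion, so $\mu_2$ strictly drops. The third operation inserts a $(Subst)$-step directly above a bud $b$ whose parent rule is non-$(Subst)$: the node $b$ becomes internal while the only new bud sits immediately under the inserted $(Subst)$, so $\mu_3$ strictly drops.

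The delicate part, and the one I expect to be the main bookkeeping obstacle, is checking that the higher-priority components never grow. For operations 2 and 3 the freshly inserted $(Subst)$ must always have a bud as premise, so it contributes nothing to $\mu_1$; in operation 2 the $(Subst)$-rule that may have been applied above $c$ still has an internal-node premise (the fresh conclusion-node carrying $S(c)$ above the new bud), leaving $\mu_1$ unchanged. Similarly, the duplicated companion is always installed as a root, so $\mu_2$ is never inflated, and operations 1 and 3 do not create any companion at all. Finally, every newly created bud sits directly under the just-introduced $(Subst)$, so $\mu_3$ cannot grow. With these invariants recorded, each operation strictly decreases the lexicographic triple and termination follows from well-foundedness of $\mathbb{N}^3$.
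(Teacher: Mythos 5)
Your proof is correct and takes essentially the same route as the paper's: the paper simply observes that the number of nodes processable by the three operations is finite and decrements after each application, which is the summed (rather than lexicographic) version of your measure $(\mu_1,\mu_2,\mu_3)$. The invariants you verify --- that each operation removes one of its own redexes while creating no new redex for any operation (every fresh bud sits under a $(Subst)$, every fresh companion is a root, every fresh $(Subst)$ has a bud as premise) --- are exactly the bookkeeping the paper's two-sentence argument leaves implicit.
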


\begin{proof}
The number of nodes that can be processed by the three operations is finite, for every pre-proof tree. In addition, it
decrements after applying each operation.
\end{proof}

The induction function is extended to allow new 
bud-companion relations between nodes from different pre-proof trees.

\begin{definition}[rb-path, IH-node]
  \label{def:rb-path} An \emph{rb-path} is a path of the form
  $[R,\ldots, H, B]$ that leads the root $R$ to a bud $B$ in
  some pre-proof tree of a pre-proof tree-set such that $B$ is the
  only bud in the path.  We will call $H$ an inductive hypothesis node (for
  short, \emph{IH-node}).
\end{definition}

A path in a pre-proof tree-set ($\cal MD$, $\cal MR$) is a list
$[N^0, N^1, \ldots]$ of nodes in $\cal MD$ such that, for all
$i\geq 0$, $S(N^{i+1})$ is one of the premises of the rule applied on
$S(N^i)$ if $N^i$ is an internal node, or $S({\cal MR}(N^i))$ if
$N^i$ is a bud.

\begin{lemma}\label{lem:conservative} The  normalisation of any
  pre-proof $(\cal D, \cal R)$ of a sequent $S$ builds a pre-proof tree-set ($\cal MD$, $\cal MR$)
\begin{enumerate}
\item that has a
  pre-proof tree rooted by a node labelled by $S$, and
\item for which each of its rb-paths $[R,\ldots, B]$ has $B$ as the only
  node that is labelled by the premise of a $(Subst)$ rule. A node is
  a $(Subst)$-node if and only if it is an $(IH)$-node.
\end{enumerate}
\end{lemma}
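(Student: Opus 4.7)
The plan is to exhibit two structural invariants that the exhaustive application of the three normalisation operations enforces on the resulting pre-proof tree-set, and then to read both claims off from those invariants; termination of the process (Lemma~\ref{lem:normalise}) guarantees that a fixpoint is actually reached, so these invariants hold globally on $(\cal MD, \cal MR)$.

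For claim~1, I would first observe that none of the three operations can ever modify the root of $\cal D$: operation~1 targets internal nodes labelled by a premise of $(Subst)$, so it needs a parent node applying $(Subst)$, which the root of a tree lacks; operation~2 is explicitly restricted to \emph{non-root} companions; and operation~3 targets bud nodes, whereas a root is never a bud. Consequently the root of $\cal D$, labelled by $S$, persists in $\cal MD$ as the root of one of its pre-proof trees.

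For claim~2, the central observation is that at the fixpoint every pre-proof tree of $\cal MD$ satisfies: (i) no internal node is labelled by the premise of a $(Subst)$ rule, for otherwise operation~1 would still apply to it; and (ii) every bud is labelled by the premise of a $(Subst)$ rule, for otherwise operation~3 would still apply. Given any rb-path $[R,\ldots,H,B]$, invariant~(ii) forces the parent $H$ of $B$ to apply $(Subst)$, so $B$ is a $(Subst)$-premise; invariant~(i) then excludes any other node on the path from being a $(Subst)$-premise, since $B$ is the only terminal node of the rb-path. For the equivalence between $(Subst)$-nodes and $(IH)$-nodes, the forward direction uses~(i): the premise child of a $(Subst)$-node cannot be internal, hence must be a bud, which makes the $(Subst)$-node the penultimate node of some rb-path. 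The reverse direction uses~(ii): an $(IH)$-node is by definition the parent of a bud, and that bud forces the rule applied at the $(IH)$-node to be $(Subst)$.

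The point I expect to need the most care is checking that these invariants are not broken in an unrepairable way when the operations are interleaved. I would verify that each operation either removes a local violation of~(i) or~(ii) (operations~1 and~3, respectively) or else introduces only the pattern ``$(Subst)$-node with a unique bud-child as its premise'' together with a freshly detached subtree rooted at the original companion (operation~2), all of which is consistent with both invariants. Once Lemma~\ref{lem:normalise} is invoked, the absence of any applicable operation at termination yields~(i) and~(ii) uniformly over $(\cal MD, \cal MR)$, delivering claim~2.
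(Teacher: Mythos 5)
Your argument is correct and follows the same route as the paper: claim~1 is proved exactly as in the paper, by checking that none of the three operations changes the label of a root, and claim~2 is what the paper dismisses with ``holds by the construction of the normal forms''. Your fixpoint invariants (no internal node is a $(Subst)$-premise, every bud is one, guaranteed by the exhaustiveness of operations~1 and~3 together with Lemma~\ref{lem:normalise}) simply make explicit the content of that one-line justification, so your write-up is in fact more detailed than the published proof.
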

\begin{proof}
  Claim 1) holds because the first operation duplicates only non-root
  nodes and the third operation expands bud nodes, so the root nodes
  do not change. If $S$ labels the root node of a pre-proof
  tree $t$ having a non-root companion $n$, $t$ will be processed by the second
  operation applied on $n$ but will still have its root labelled by $S$.

  Claim 2) holds by the construction of the normal forms.
\end{proof}

\begin{example}\label{ex:normalf} The second operation can be applied
  on the non-root companion from Example~\ref{ex:preproof}, denoted by
  $(*)$, to give the following normalised pre-proof tree-set:

\begin{center}
    \includegraphics[width=0.9\linewidth]{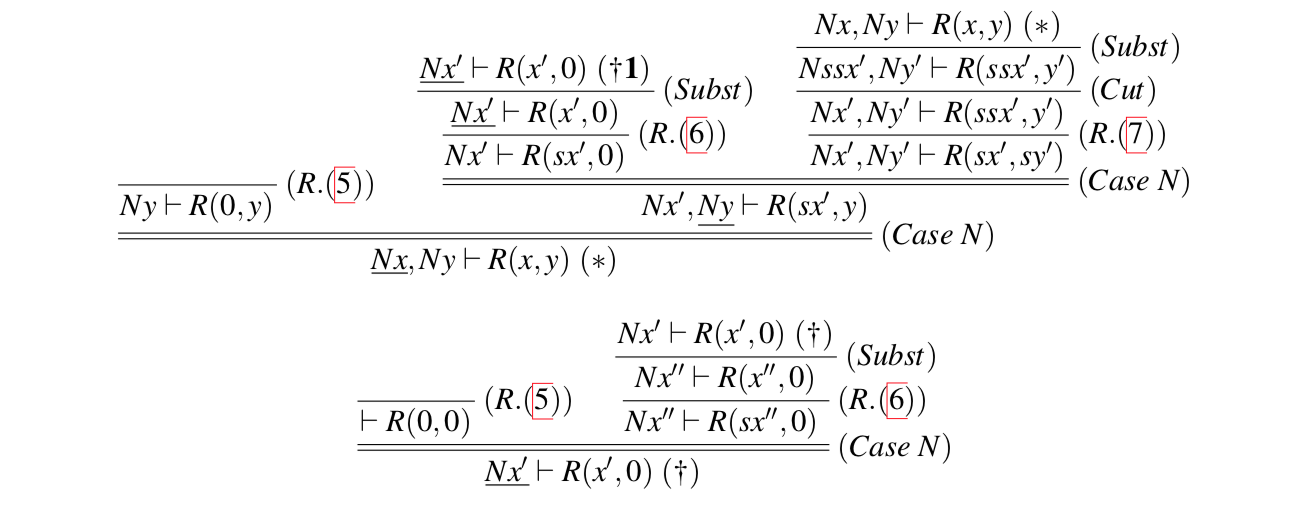}
  \end{center}


\end{example}

\subsection{Building the digraph of a pre-proof tree-set}
\label{sec:digraph}

Any pre-proof tree-set can also be represented as a \emph{digraph} of
sequents built from the nodes of its tree-set. The digraph associated
to a pre-proof tree-set ($\cal MD$, $\cal MR$) is crucial in our
setting to check
whether ($\cal MD$, $\cal MR$) is a proof tree-set.  Its edges are arrows built
as follows:
\begin{itemize}
\item a \emph{forward} arrow leads a node $N^1$ to a node $N^2$ if there is a
  rule that was applied on the sequent labelling $N^1$ and the sequent
  labelling $N^2$ is a premise of the rule;
\item a \emph{back-link} (or backward arrow)  starts from a bud  and
  ends to its companion. 
\end{itemize}

Some arrows will be annotated by substitutions. Each forward arrow,
starting from a $(Gen)$-node whose principal formula is $x=u$, is
annotated by the \emph{equality substitution} $\{x\mapsto u\}$.
The forward arrow starting from a node $N$ that is different from 
$(Gen)$- and $(Subst)$-nodes is annotated with the
\emph{identity substitution} for $S(N)$, which maps the free variables
from $S(N)$ to themselves. Finally, the forward arrows starting
from $(Subst)$-nodes and the back-links are not annotated. They help
to build infinite paths but do not play any role when defining the
soundness constraints.

By abuse of notation, a \emph{path} in a digraph is a (potentially
infinite) list of nodes built by following the arrows in the
digraph. An \emph{rb-path} is any path leading a root to some bud node
and does not have other bud nodes. \textbf{Unless otherwise stated, we
  will consider only rb-paths in the digraphs associated to \emph{normalised}
  pre-proof tree-sets.} 

\begin{remark}\label{rem:unique} According to Lemma~\ref{lem:conservative}, the
bud node $B$ of any such rb-path is the only node in the rb-path for
which $S(B)$ is the premise of a $(Subst)$ rule.
\end{remark}

\begin{definition}[cumulative substitution]\label{def:cumulative}
  An rb-path $[N^1,\ldots,N^n,B]$ ($n> 0$) can be annotated by
  the \emph{cumulative substitution}
  $\sigma_{id}^{all}\sigma_1\cdots\sigma_{n-1}$, where $\sigma_i$ is
  the substitution annotating the forward arrow leading $N_i$ to
  $N_{i+1}$, for each $i\in[1..n-1]$, and $\sigma_{id}^{all}$ is the
  \emph{overall} identity substitution
  $\cup_{N\in [N^1,\ldots,N^{n-1}]}\{x\mapsto x \mid x \in FV(S(N))\}$.
\end{definition}

A list of sequents $[ S_1, \ldots, S_n ]$ ($n>0$) is
 \emph{admissible} if either i) it is a singleton ($n=1$), or ii) for every $i\in[2..n]$, $S_i$ is the premise of some rule
 whose conclusion is $S_{i-1}$. By construction,
 the list of sequents labelling the nodes from every path from the digraph
 associated to a pre-proof tree-set is admissible. 

\begin{lemma}
  \label{lem:cumulative} Let $[N^1, \ldots, N^{n-1}, N^n, B]$ be an
  rb-path. We define its \emph{cumulative} list $l_c$ as
  $[ S(N^1)[\theta_{(1,n)}^c], \ldots, S(N^{n-1})[\theta_{(n-1,n)}^c],
  S(N^n),S(B)]$, where $\theta_{(i,n)}^c$ is the cumulative
    substitution for $[N^{i}, \ldots, N^{n-1}, N^n]$. Then, the
    following properties hold:

\begin{enumerate}
\item 
 $l_c$  is admissible,  and
\item the rule applied on each $S(N^{i})$ is also applicable on
  $S(N^{i})[\theta_{(i,n)}^c]$, $\forall i\in[1..n-1]$, if it is different
  from $(Gen)$. If the rule is $(Gen)$, the $(Gen)$-step can be replaced by a
  $(Wk)$-step, where the LK's $(Wk)$ rule is defined as
  \vspace{-.2cm}
  \begin{center}
    \includegraphics[width=0.9\linewidth]{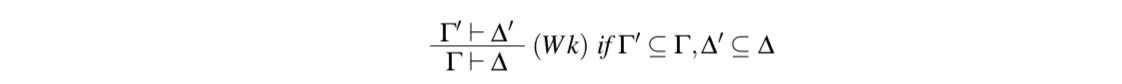}
  \end{center}
\end{enumerate}
\end{lemma}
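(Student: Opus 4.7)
The plan is to prove part (2) by downward induction on $i$ from $n-1$ to $1$, and to note that part (1) follows immediately, since admissibility of $l_c$ is precisely the statement that each consecutive pair in it is related as conclusion/premise of some inference rule. The algebraic backbone is the decomposition $\theta_{(i,n)}^c = \sigma_i \cdot \theta_{(i+1,n)}^c$ modulo the overall-identity component, together with the fact that by Remark~\ref{rem:unique} and Lemma~\ref{lem:conservative}, no $N^i$ with $i\in[1..n-1]$ is a $(Subst)$-node, so only two subcases arise for the rule applied at $N^i$.

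In the non-$(Gen)$ subcase, $\sigma_i$ is the identity on $FV(S(N^i))$, hence $S(N^i)[\theta_{(i,n)}^c]$ differs from $S(N^i)$ only by the outer application of $\theta_{(i+1,n)}^c$. Since the LK, equality and unfold rules are schematic in their formula and term parameters, applying $\theta_{(i+1,n)}^c$ to both the conclusion and premise of the rule instance at $N^i$ yields a valid instance of the same rule whose premise is precisely $S(N^{i+1})[\theta_{(i+1,n)}^c]$; any eigenvariable freshness condition is re-secured by $\alpha$-renaming.

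In the $(Gen)$ subcase, with principal formula $t=u$ and annotation $\sigma_i=\{t\mapsto u\}$, applying $\theta_{(i,n)}^c$ to $\Gamma, t=u \vdash \Delta$ collapses the equality because $t\notin FV(u)$: both sides of the equation reduce to $u' := u[\theta_{(i+1,n)}^c]$, and the resulting sequent is $\Gamma[t\mapsto u][\theta_{(i+1,n)}^c],\, u'=u' \vdash \Delta[t\mapsto u][\theta_{(i+1,n)}^c]$. This is related to $S(N^{i+1})[\theta_{(i+1,n)}^c] = \Gamma[t\mapsto u][\theta_{(i+1,n)}^c] \vdash \Delta[t\mapsto u][\theta_{(i+1,n)}^c]$ by a single left-weakening step that adds the trivial formula $u'=u'$ to the antecedent, which is exactly the $(Gen)\!\rightarrow\!(Wk)$ replacement announced in the statement.

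The main obstacle I expect is the bookkeeping around substitution composition and the role of $\sigma_{id}^{all}$: one has to verify that variables eliminated by earlier $(Gen)$-steps do not re-enter through the domains of the later $\sigma_j$'s, that no unintended capture occurs when pushing $\theta_{(i+1,n)}^c$ through a rule instance (in particular through quantifier or $(Case)$ rules that introduce fresh variables), and that the extra identity component $\sigma_{id}^{all}$ does not interfere with the clean decomposition $\theta_{(i,n)}^c = \sigma_i \cdot \theta_{(i+1,n)}^c$. These are essentially hygiene conditions that should hold under standard $\alpha$-renaming conventions, but spelling them out is the only delicate part of an otherwise routine inductive argument.
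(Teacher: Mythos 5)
Your proposal is correct and matches the paper's own argument in all essentials: the case split on whether the annotating substitution is an identity or an equality substitution, the fact that a rule instance survives instantiation of conclusion and premise, and the replacement of the $(Gen)$-step by a $(Wk)$-step once the principal equality degenerates to $u'=u'$ (plus the final appeal to Lemma~\ref{lem:conservative} for the pair $S(N^n),S(B)$, which you leave implicit). The only difference is organizational — you peel off the first arrow ($\theta_{(i,n)}^c=\sigma_i\theta_{(i+1,n)}^c$) and check each edge locally, whereas the paper inducts on the path length and appends the last arrow ($\theta_{(i,n)}^c=\theta_{(i,n-1)}^{c}\theta$) — and the hygiene issues you flag are exactly what the paper handles by explicitly extending the cumulative substitution with identities on the fresh variables introduced by $(\forall R)$ and $(\exists L)$.
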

  \vspace{-.4cm}
\begin{proof}
  We will perform induction on $n$. If $n=1$, then $N^n\equiv N^1$
  and $[S(N^1)]$ is a singleton, hence it is admissible.

  If $n>1$, let  $p$ denote the path 
  $[N^1, \ldots, N^{n-1},N^n,B]$. By induction hypothesis, we assume
  that
  $[ S(N^1)[\theta_{(1,n-1)}^{c}], \ldots, S(N^{n-2})[\theta_{(n-2,n-1)}^{c}],
  S(N^{n-1})]$
  is admissible, where $\theta_{(i,n-1)}^{c}$ ($i\in[1..n-2]$) is the cumulative
  substitution annotating $[N^{i}, \ldots, N^{n-1}]$ and the rules applied on
  $S(N^{i})[\theta_{(i,n-1)}^{c}]$ and $S(N^{i})$ are the same. We denote
  by $\theta_{(n-1,n-1)}^{c}$ the identity substitution for $S(N^{n-1})$.

  Let $\theta_{(i,n)}^c$ be the cumulative substitution annotating
  $[N^{i}, \ldots, N^{n-1}, N^n]$, for all $i\in[1..n-1]$. Let also
  $\theta$ be the substitution annotating the forward arrow leading
  $N^{n-1}$ to $N^n$, which can be
  either an identity substitution, or an equality substitution. In the
  first case, for every $i\in[1..n-1]$, $\theta_{(i,n)}^c$ is i)
  $\theta_{(i,n-1)}^{c}\cup \{x\mapsto x\mid x \in \overline{x}\}$ if
  the rule applied on $S(N^{n-1})$ is the LK's rule $(\forall R)$ or $(\exists L)$,
  defined below:
\begin{center}
    \includegraphics[width=0.9\linewidth]{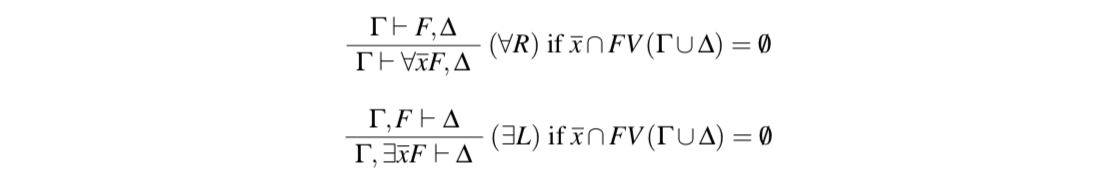}
  \end{center}
  
  and $\overline{x}$ is the vector of new free variables introduced by
  these rules, or ii)
  $\theta_{(i,n-1)}^{c}$, otherwise. Since
  $S(N^i)[\theta_{(i,n-1)}^{c}]\equiv S(N^i)[\theta_{(i,n)}^{c}]$ by
  induction hypothesis, we can apply the same
  rules on $S(N^i)[\theta_{(i,n)}^{c}]$ and $S(N^{i})$, hence the list
  $[ S(N^1)[\theta_{(1,n)}^c], \ldots, S(N^{n-1})[\theta_{(n-1,n)}^c],
  S(N^n)]$ is admissible. $[ S(N^1)[\theta_{(1,n)}^c], \ldots, S(N^{n-1})[\theta_{(n-1,n)}^c],
  S(N^n), S(B)]$ is also admissible since $S(B)$ is the premise of a
  $(Subst)$ rule whose conclusion is $S(N^n)$, by property 2) from Lemma~\ref{lem:conservative}.

  For the second case,  $\theta$ is an equality substitution.
  We have that $\theta_{(i,n)}^c$ equals $\theta_{(i,n-1)}^{c}\theta$,
  for all $i\in[1..n-1]$. Since the rule applied on a sequent can also
  be applied on every instance of it, we have that
  $[ S(N^1)[\theta_{(1,n)}^c], \ldots, S(N^{n-1})[\theta_{(n-1,n)}^c],
  S(N^n)]$ is admissible; the rule applied on $S(N^{i})$ can also be
  applied on $S(N^{i})[\theta_{(i,n)}^c]$, for all
  $i\in[1..n-1]$. Notice that the $(Gen)$ rule has $S(N^n)$ as premise
  when applied on $S(N^{n-1})[\theta_{(n-1,n)}^c\theta]$. Let us
  assume that $x=u$ is the principal formula of
  $S(N^{n-1})[\theta_{(n-1,n)}^c]$. Then, $\theta$ is
  $\{x\mapsto u\}$. On the one hand, $(Gen)$ cannot be applied on
  $S(N^{n-1})[\theta_{(n-1,n)}^c\theta]$, whose principal formula is
  $u=u$, when $u$ is a non-variable term. On the other hand, the
  generalised form of $(Gen)$ from CLKID$^\omega$, displayed in
  Figure~\ref{fig:eq}, would replace $u$ by
  $u$ and delete $u=u$. If $S(N^{n-1})[\theta_{(n-1,n)}^c\theta]$ is
  of the form $\Gamma,u=u\vdash \Delta$, the same result can be
  achieved with CLKID$_N^\omega$ by applying $(Wk)$ instead:
\begin{center}
    \includegraphics[width=0.9\linewidth]{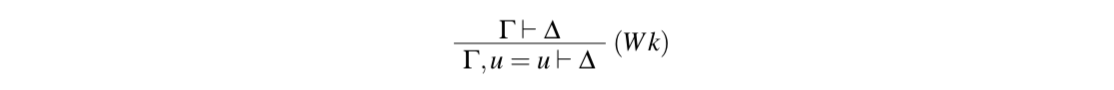}
  \end{center}

\sloppy{So, the list
$[ S(N^1)[\theta_{(1,n)}^c], \ldots, S(N^{n-1})[\theta_{(n-1,n)}^c],
S(N^n)]$ is admissible.

$[ S(N^1)[\theta_{(1,n)}^c], \ldots, S(N^{n-1})[\theta_{(n-1,n)}^c],
S(N^n), S(B)]$ is also  admissible}, as shown for the first case.
\end{proof}

A path has \emph{cycles} if some nodes are repeated in the path.  The
set of \emph{strongly connected components} (SCCs) of a digraph
$\cal P$ of some pre-proof tree-set ($\cal MD$, $\cal MR$) is a
partition of $\cal P$, where each SCC is a maximal sub-graph for which
any two different nodes are linked in each direction by following only
arrows from the sub-graph. Therefore, every non-singleton SCC has at
least one cycle. Additionally, if $\cal P$ is acyclic, each of its
nodes is a singleton SCC.

\begin{example}\label{ex:digraph} The digraph of the normalised pre-proof tree-set from Example~\ref{ex:normalf} is:

\begin{center}
    \includegraphics[width=.8\linewidth]{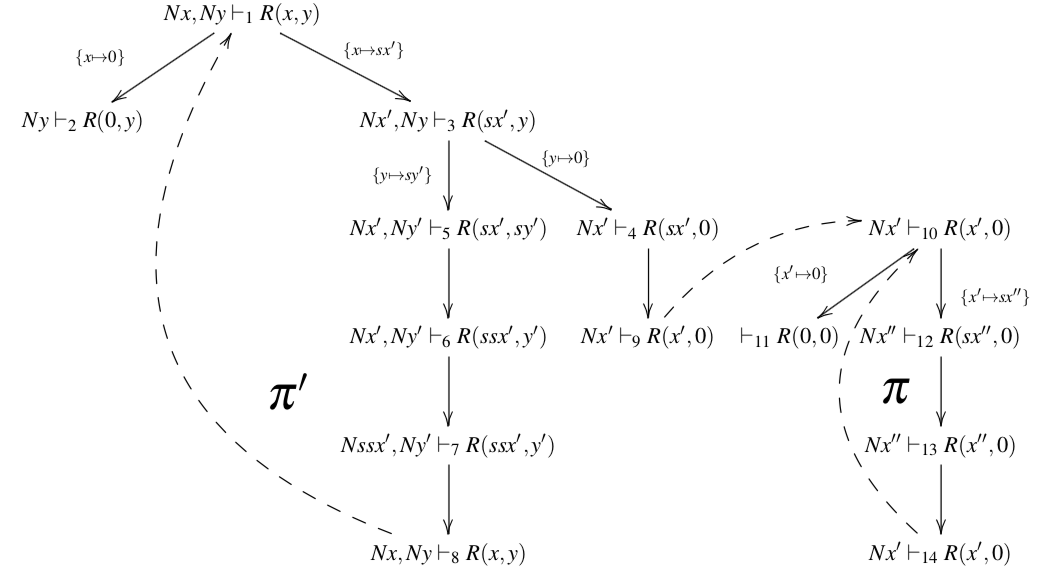}
  \end{center}



The sequent labelling a node is annotated by the number of the node in
the digraph. The digraph has two non-singleton SCCs: i) $\pi$:$\{N^{10},N^{12},N^{13}, N^{14}\}$,
and ii) $\pi':\{N^{1}, N^{3}, N^{5}, N^{6}, N^{7}, N^{8}\}$.
\end{example}

\subsection{Defining the ordering and derivability conditions}
\label{sec:validate}

The premises for defining the new soundness criterion are similar
to~\cite{Stratulat:2017ac}. 
Let $\pi$ be a SCC from $\cal P$ and $<_a$ an ordering \emph{stable
  under substitutions} defined over the set $\cal S$ of instances of
the IAAs from the sequents labelling nodes inside $\pi$, i.e., if
$l <_a l'$ then $l[\sigma] <_a l'[\sigma]$, for all $l,l'\in {\cal S}$
and substitution $\sigma$. 
Given a path $p$ in $\pi$, we say that an IAA $\tau_j$
\emph{derives from} an IAA $\tau_i$ using the trace
$(\tau_k)_{(k\geq 0)}$ along $p$ if $i <j$. Also, given two arbitrary
substitutions $\gamma$ and $\delta$, we say that $\tau_j[\gamma]$
derives from $\tau_i[\delta]$ using $(\tau_k)_{(k\geq 0)}$ along
$p$. $<_{\pi}$ is  the \emph{multiset
  extension}~\cite{Baader:1998ve} of $<_a$.

The ordering constraints from a multiset extension relation comparing
two sequent instances can be combined with derivability constraints on
IAAs to give the \emph{$<_{\pi}$-derivability} relation, referred to as
\emph{ordering-derivability} when the ordering is not known. For this,
we assume that every sequent $S$ has associated a measure value
(weight), denoted by $A_{S}$ and represented by a multiset
of IAAs of $S$. 

\begin{definition}[$<_{\pi}$-derivability]
\label{def:derivabillity}
Let $N^i$ and $N^j$ be two nodes occurring in some path $p$ from
$\pi$, and $\theta$, $\delta$ be two substitutions. We define
$A_{S(N^i) [\theta]}'$ (resp., $A_{S(N^j)[\delta]}'$) as the multiset,
resulting from $A_{S(N^i) [\theta]}$ (resp., $A_{S(N^j)[\delta]}$)
after the pairwise deletion of all common IAAs from
$A_{S(N^i) [\theta]}$ and $A_{S(N^j) [\delta]}$. In addition, we
assume that for each
$l\in A_{S(N^j)[\delta]}\mybs A_{S(N^j)[\delta]}'$, there is
$l'\in A_{S(N^i) [\theta]}$\textbackslash$ A_{S(N^i) [\theta]}'$
satisfying i) $l\equiv l'$, and ii)
$l$ is the unique literal from $A_{S(N^j)[\delta]}$ that derives from $l'$ using some trace following $p$. 

Then,
$S(N^j)[\delta]$ is \emph{$<_{\pi}$-derivable} from $S(N^i)[\theta]$
along $p$ if 
for each $l\in A_{S(N^j)[\delta]}'$ there exists
$l'\in A_{S(N^i) [\theta]}'$ such that $l'>_a l$ and $l$ derives from
$l'$ using some trace following $p$.
\end{definition}



By the definition of $<_{\pi}$ as a multiset extension of $<_a$, the
following results 
can be proved when considering some path in $\pi$.

\begin{lemma}\label{lem:derivable}
  If $S$ is $<_{\pi}$-derivable from $S'$ then $A_S<_{\pi}A_{S'}$.
\end{lemma}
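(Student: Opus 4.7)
The plan is to unfold the definitions and invoke the standard characterization of the multiset extension. Let me write $N := A_{S(N^i)[\theta]}$ and $M := A_{S(N^j)[\delta]}$, and let $N'$, $M'$ denote the corresponding primed multisets produced by the pairwise deletion described in Definition~\ref{def:derivabillity}. The key initial observation is that pairwise deletion produces equal ``common'' parts: since every literal removed from $M$ is paired with a syntactically identical literal removed from $N$, the multisets $M \setminus M'$ and $N \setminus N'$ coincide. Call this common multiset $C$, so that $N = C \uplus N'$ and $M = C \uplus M'$.

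Next, I would invoke the hypothesis that $S$ is $<_{\pi}$-derivable from $S'$, which supplies, for each $l \in M'$, some $l' \in N'$ with $l' >_a l$. Recalling that $<_{\pi}$ is by definition the multiset extension of $<_a$, one has $M <_{\pi} N$ precisely when $M$ can be written as $(N \setminus X) \uplus Y$ for some non-empty $X \subseteq N$ such that every $y \in Y$ is dominated by some $x \in X$ under $<_a$. Taking $X := N'$ and $Y := M'$ gives $M = C \uplus M' = (N \setminus N') \uplus M' = (N \setminus X) \uplus Y$, and the derivability hypothesis furnishes exactly the required dominating element in $X$ for each $y \in Y$. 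This yields $A_S = M <_{\pi} N = A_{S'}$, which is the desired conclusion.

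The one delicate point I anticipate is the non-emptiness side condition $X \neq \emptyset$ required for strictness of the multiset extension. When $M' \neq \emptyset$, this is automatic, because each $l \in M'$ demands a strictly greater witness $l' \in N'$, forcing $N'$ to be non-empty as well. The degenerate case $M' = \emptyset$ together with $N' = \emptyset$ would give $M = N$ and make the conclusion fail, while the hypothesis is vacuously satisfied; I would handle this either by reading Definition~\ref{def:derivabillity} as implicitly requiring a non-trivial decrease somewhere (which is how it is used in the soundness proof), or by noting in a remark that the lemma is only of interest in the non-degenerate case. Once this corner is disposed of, the body of the proof is a purely bookkeeping application of the multiset-extension definition and should be only a few lines long.
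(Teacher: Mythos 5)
Your proof is correct and takes essentially the same route as the paper's, whose entire argument is the single sentence ``By the definition of the ordering constraint in the $<_{\pi}$-derivability relation''; you have merely spelled out the bookkeeping via the standard characterization of the multiset extension. The degenerate corner you flag --- both residual multisets empty, hence $A_S = A_{S'}$ and strictness of $<_{\pi}$ failing while the derivability hypothesis holds vacuously --- is a genuine edge case that the paper's one-line proof does not address either, so your explicit caveat is an improvement rather than a defect.
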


\begin{proof}
By the definition of the ordering constraint in the
$<_{\pi}$-derivability relation. 
\end{proof}

\begin{lemma}\label{lem:derivable-stability}
The `$<_{\pi}$-derivability' relation is stable under substitutions
and transitive.
\end{lemma}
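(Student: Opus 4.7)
The plan is to unpack Definition~\ref{def:derivabillity} and verify its three ingredients survive (i)~an outer substitution and (ii)~composition along a path. The only external facts I need are that $<_a$ is stable under substitutions (given), that the multiset extension of a transitive ordering is transitive (standard, as in~\cite{Baader:1998ve}), and that the trace conditions of Definition~\ref{def:trace} are all local equalities or case-descendant relations and hence are preserved pointwise by applying a further substitution $\sigma$.

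For stability, I fix $\sigma$ and show that $S(N^j)[\delta\sigma]$ is $<_\pi$-derivable from $S(N^i)[\theta\sigma]$ along $p$ whenever $S(N^j)[\delta]$ is $<_\pi$-derivable from $S(N^i)[\theta]$ along $p$. Syntactic equality is preserved by $\sigma$, so any pairwise matching of common IAAs between $A_{S(N^i)[\theta]}$ and $A_{S(N^j)[\delta]}$ remains a matching in the $\sigma$-images (and possibly new matchings are created, shrinking the primed sets). For each $l\in A_{S(N^j)[\delta\sigma]}'$ I locate a preimage $l_0\in A_{S(N^j)[\delta]}'$ with $l_0[\sigma]\equiv l$: by the original derivability there is $l_0'\in A_{S(N^i)[\theta]}'$ with $l_0'>_a l_0$ and a trace witnessing derivation of $l_0$ from $l_0'$ along $p$; stability of $<_a$ upgrades this to $l_0'[\sigma]>_a l$, and applying $\sigma$ pointwise to the trace yields a trace along $p$ witnessing derivation of $l$ from $l_0'[\sigma]$. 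The uniqueness clause for matched common literals transfers similarly, because a substituted trace remains a trace.

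For transitivity, I take $N^i,N^j,N^k$ occurring in that order on a common path $p$ and assume $S(N^j)[\theta_2]$ is $<_\pi$-derivable from $S(N^i)[\theta_1]$ and $S(N^k)[\theta_3]$ from $S(N^j)[\theta_2]$ along $p$. For every $l_k$ left in the primed remainder of $A_{S(N^k)[\theta_3]}$ with respect to $A_{S(N^i)[\theta_1]}$, I chase it backwards through $N^j$: the $(j,k)$ step gives either a common partner $l_j\equiv l_k$ or a strictly greater $l_j>_a l_k$ together with a trace on the $N^j$-to-$N^k$ portion of $p$; the $(i,j)$ step then returns either a common partner $l_i\equiv l_j$ or $l_i>_a l_j$ with a trace on the $N^i$-to-$N^j$ portion. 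The four combinations reduce to a single conclusion $l_i\mathrel{\geq_a}l_k$ by transitivity of $<_a$, and the two trace fragments concatenate to a trace along $p$. The case $l_i\equiv l_k$ means $l_k$ was actually a common literal between $N^i$ and $N^k$ and would already have been deleted, contradicting $l_k\in A_{S(N^k)[\theta_3]}'$; so we always obtain the required strict $l_i>_a l_k$. The ordering half of the conclusion (Lemma~\ref{lem:derivable}) is then automatic from transitivity of the multiset extension.

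The main obstacle is the bookkeeping in this last case analysis: a literal may be common at one step and strictly decreasing at the other, in either order, and the uniqueness clause in Definition~\ref{def:derivabillity} is essential to prevent the chain $l_k\leftarrow l_j\leftarrow l_i$ from branching when several equal IAAs appear in a sequent. Once that uniqueness is invoked to select a canonical predecessor at $N^j$, the chaining is forced and the composed trace is well-defined, so both properties follow.
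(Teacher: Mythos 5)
Your proposal is correct and follows essentially the same route as the paper: stability is obtained by pushing the outer substitution through the matching of common IAAs, the $<_a$-comparisons (via stability of $<_a$), and the witnessing traces, while transitivity is proved by the same four-way case analysis at the intermediate node (strict-vs-common at each of the two steps) with concatenation of the trace fragments. If anything, your element-wise treatment of stability is more explicit than the paper's one-line appeal to the invariance of derivability under instantiation, and your remark about the uniqueness clause preventing the backward chain from branching makes precise a point the paper leaves implicit.
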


\begin{proof}
  Let $S$ and $S'$ be two sequents such
  that $S$ is $<_{\pi}$-derivable from $S'$ along some path $p$ in $\pi$. By
  Lemma~\ref{lem:derivable}, $A_{S'}>_{\pi}A_S$. Since $<_{\pi}$ is stable
  under substitutions, we have that $A_{S'[\sigma]}>_{\pi}A_{S[\sigma]}$, for every
  substitution $\sigma$. According to
  Definition~\ref{def:derivabillity}, the derivability relations
  between their IAAs do not change by instantiation
  operations. Therefore, $S[\sigma]$ is $<_{\pi}$-derivable from
  $S'[\sigma]$ along $p$. We conclude that
  the `$<_{\pi}$-derivability' relation is stable under substitutions.\\

To prove the transitivity property, let us assume three sequents
$S_1$, $S_2$ and $S_3$ labelling nodes in a path $p$ built by the
concatenation of two paths $p_1$ and $p_2$ such
that $S_3$ is
$<_{\pi}$-derivable from  $S_2$ along $p_2$ and $S_2$ is
$<_{\pi}$-derivable from  $S_1$ along $p_1$. We will try to prove that $S_3$ is
$<_{\pi}$-derivable from  $S_1$ along $p$.

Since $S_3$ is
$<_{\pi}$-derivable from  $S_2$ along $p_2$,  by
Definition~\ref{def:derivabillity} we have that
\begin{enumerate}[label=(\roman*1)]
\item for each
$l_3\in A_{S_3}'$ there exists $l_2\in A_{S_2}'$ such that
$l_2>_a l_3$ and $l_3$ derives from $l_2$ using some trace 
following $p_2$, and 
\item for each
$l_3\in A_{S_3}$\textbackslash$ A_{S_3}'$, there is some
$l_2\in A_{S_2}$\textbackslash$ A_{S_2}'$ such
that $l_3\equiv l_2$ and
$l_3$ is the unique IAA that derives from $l_2$ using some trace following $p_2$,
\end{enumerate}
where $A_{S_3}'$ (resp., $A_{S_2}'$) is the multiset
resulting from $A_{S_3}$ (resp., $A_{S_2}$)
after the pairwise deletion of all common IAAs
from $A_{S_3}$ and $A_{S_2}$. Also, since $S_2$ is
$<_{\pi}$-derivable from  $S_1$ along $p_1$, we have that
\begin{enumerate}[label=(\roman*2)]
\item
for each
$l_2\in A_{S_2}''$, there exists $l_1\in A_{S_1}'$ such that
$l_1>_a l_2$ and $l_2$ derives from $l_1$ using some trace 
following $p_1$, and
\item for each
$l_2\in A_{S_2}$\textbackslash$ A_{S_2}''$, there is some
$l_1\in A_{S_1}$\textbackslash$ A_{S_1}'$ such
that $l_2\equiv l_1$ and
$l_2$ is the unique IAA that derives from $l_1$ using some trace following $p_1$,
\end{enumerate}
where $A_{S_2}''$ (resp., $A_{S_1}'$) is the multiset
resulting from $A_{S_2}$ (resp., $A_{S_1}$)
after the pairwise deletion of all common IAAs
from $A_{S_2}$ and $A_{S_1}$. We have to check that 
for each
$l_3\in A_{S_3}''$, there exists $l_1\in A_{S_1}''$ such that
$l_1>_a l_3$ and $l_3$ derives from $l_1$ using some trace
following $p$, 
where $A_{S_3}''$ (resp., $A_{S_1}''$) is the multiset
resulting from $A_{S_3}$ (resp., $A_{S_1}$)
after the pairwise deletion of all common IAAs
from $A_{S_3}$ and $A_{S_1}$. Moreover, for each
$l_3\in A_{S_3}$\textbackslash$ A_{S_3}''$, there is some
$l_1\in A_{S_1}\mybs A_{S_1}''$ such
that $l_3\equiv l_1$ and
$l_3$ is the unique IAA that  derives from $l_1$ using some trace  following $p$. We consider the following cases:

\begin{enumerate}
\item
  If $l_3 \in A_{S_3}'$ there exists $l_2\in A_{S_2}'$ such that
  $l_2 >_a l_3$ and $l_3$ derives from $l_2$ using some trace $t_2$
  following $p_2$. 
  \begin{enumerate}
  \item If
    $l_2 \in A_{S_2}''$ there exists $l_1\in A_{S_1}'$ such that
    $l_1 >_a l_2$ and $l_2$ derives from $l_1$ by using some trace
    $t_1$ following $p_1$. Then $l_1 >_a l_3$ by
      the transitivity of $<_a$, so $l_1 \in A_{S_1}''$, $l_3 \in A_{S_3}''$ and $l_3$ derives from $l_1$ using the
      concatenation of $t_1$ and $t_2$ following $p$.
    \item If $l_2\in A_{S_2}$\textbackslash$ A_{S_2}''$, there is
      $l_1\in A_{S_1}$\textbackslash$ A_{S_1}'$ such that
      $l_2\equiv l_1$ and $l_2$ is the unique IAA that derives from $l_1$ by using some trace
      $t_1$ following $p_1$. Since $l_1 (\equiv l_2) >_a l_3$, we have
      that $l_1 \in A_{S_1}''$, $l_3 \in A_{S_3}''$ and $l_3$ derives from $l_1$ using the
      concatenation of $t_1$ and $t_2$ following $p$.
  \end{enumerate}

\item If $l_3 \in A_{S_3}$\textbackslash$ A_{S_3}'$ there exists $l_2\in A_{S_2}'$ such that
  $l_3 \equiv l_2$ and $l_3$ is the unique IAA that derives from $l_2$ using some trace $t_2$
  following $p_2$. 
  \begin{enumerate}
  \item If
    $l_2 \in A_{S_2}''$ there exists $l_1\in A_{S_1}'$ such that
    $l_1 >_a l_2$ and $l_2$ derives from $l_1$ by using some trace
    $t_1$ following $p_1$. Then, $l_1 >_a (l_2 \equiv) l_3$, so $l_1
    \in A_{S_1}''$, $l_3 \in A_{S_3}''$ and $l_3$ derives from $l_1$ using the
      concatenation of $t_1$ and $t_2$ following $p$.
    \item If $l_2 \in A_{S_2}$\textbackslash$ A_{S_2}''$ there exists
      $l_1\in A_{S_1}'$ such that $l_1 \equiv l_2$ and $l_2$ is the
      unique IAA that derives from $l_1$ by using some trace $t_1$
      following $p_1$. This means that
      $l_3\in A_{S_3}$\textbackslash$ A_{S_3}''$,
      $l_1\in A_{S_1}$\textbackslash$ A_{S_1}''$ with $l_1\equiv (l_2
      \equiv) l_3$
      and $l_3$ derives from $l_1$ using the concatenation of $t_1$
      and $t_2$ following $p$. In addition, $l_3$ is the unique IAA in
      $A_{S_3}$ that derives from $l_1$.
  \end{enumerate}
  
\end{enumerate}

\end{proof}

The soundness criterion consists in
checking if the sequents labelling $(IH)$-nodes from every
non-singleton SCC, referred to as \emph{induction hypotheses},
satisfy some constraints.

\begin{definition}[induction hypothesis (IH), IH discharged by a
 SCC] 
\label{def:IHcycle}
Let $\pi$ be a non-singleton SCC and  $[R,\ldots, H, B]$ an rb-path $p$ in
$\pi$. We say that the \emph{induction
hypothesis} (IH) $S(H)$ is \emph{discharged} by $\pi$ if $S(H)$ is 
$<_{\pi}$-derivable from $S(R)[\theta^c]$ along
$p$, where $\theta^c$ is the cumulative substitution
annotating $p$.
\end{definition}

\begin{theorem}[soundness]
\label{thm:soundness} 
The sequents, labelling the roots from every
normalised pre-proof tree-set whose non-singleton SCCs  discharge
their IHs,  are true.
\end{theorem}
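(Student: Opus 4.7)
The plan is a \emph{Descente Infinie} argument that follows the outline given at the start of Section~\ref{sec:proofs}. I would assume for contradiction that some normalised pre-proof tree-set $(\mathcal{MD},\mathcal{MR})$ satisfies the discharge hypothesis of the theorem but has a root $R_0$ labelled by a false sequent. The first task is to build an infinite path $p = [N^0, N^1, \ldots]$ in the associated digraph, starting at $N^0 \equiv R_0$, all of whose nodes label false sequents. This is routine: soundness of the LK, equality, unfold, $(Case)$ and $(Subst)$ rules guarantees that some premise of every false-labelled non-terminal node is false, and every bud labels the same sequent as its companion (hence equally false). So such a path can be assembled step by step.

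Since the digraph is finite and its SCC quotient is a DAG, the tail of $p$ eventually stabilises inside a single SCC $\pi$. This $\pi$ must be non-singleton, because the tail is infinite and follows infinitely many back-links, which requires mutual reachability. By Lemma~\ref{lem:conservative} (and the exhaustive application of the second normalisation operation, which makes every companion a root of a pre-proof tree), the stabilised tail decomposes uniquely into an infinite concatenation of rb-paths of $\pi$ joined by back-links: there is a sequence of roots $R_k, R_{k+1}, \ldots$ in $\pi$ together with rb-paths $[R_i, \ldots, H_i, B_i]$ in $\pi$ such that $S(B_i) \equiv S(R_{i+1})$ for every $i \geq k$. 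By Remark~\ref{rem:unique}, each $H_i$ is the unique $(Subst)$-node in its rb-path, so $S(B_i) \equiv S(H_i)[\sigma_i]$ for the substitution $\sigma_i$ used by that $(Subst)$-step.

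The key step is to convert the discharge hypothesis into an infinite $<_\pi$-descending chain. For each $i \geq k$, the hypothesis gives that $S(H_i)$ is $<_\pi$-derivable from $S(R_i)[\theta^c_i]$, where $\theta^c_i$ is the cumulative substitution of the $i$th rb-path (Definition~\ref{def:cumulative}); Lemma~\ref{lem:cumulative} confirms that this instantiated list is admissible. Using substitution-stability of $<_\pi$-derivability (Lemma~\ref{lem:derivable-stability}) to push $\sigma_i$ through, and then transitivity to chain successive steps through the back-link identification $S(H_i)[\sigma_i] \equiv S(R_{i+1})$, I would obtain $<_\pi$-derivability of $S(R_{i+1})[\theta^c_{i+1}]$ from $S(R_i)[\theta^c_i \sigma_i]$ (suitably rebased). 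Applying Lemma~\ref{lem:derivable} at each stage yields an infinite descending chain $A_{S(R_k)[\tau_k]} >_\pi A_{S(R_{k+1})[\tau_{k+1}]} >_\pi \cdots$ of weight multisets. Because $<_a$ is well-founded (semantically, it is the ordering on ordinal approximants of the IAAs in the standard model of $\Phi$), so is its multiset extension $<_\pi$, yielding the desired contradiction.

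The main obstacle will be the third paragraph: correctly threading the two different substitution families (the cumulative $\theta^c_i$ used by the discharge condition of the $i$th rb-path and the $(Subst)$-substitution $\sigma_i$ that realises the back-link) into a single coherent descending chain, without either losing the strictness of the $<_\pi$-decrease or mismatching the instances of $S(R_i)$ between consecutive rb-paths. Substitution-stability and transitivity of $<_\pi$-derivability (Lemma~\ref{lem:derivable-stability}) are exactly the tools required to reconcile these at every bud–companion junction, so the argument should close cleanly provided the bookkeeping is done with care.
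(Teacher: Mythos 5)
Your overall strategy (contradiction, infinite path of false sequents trapped in a non-singleton SCC, decomposition into rb-paths, use of the discharge condition plus Lemmas~\ref{lem:derivable} and~\ref{lem:derivable-stability} to get an infinite $<_{\pi}$-descending chain of weights) matches the first two thirds of the paper's proof. But your final step is a genuine gap: you close the argument by asserting that $<_a$ is well-founded, ``semantically the ordering on ordinal approximants''. The paper never assumes well-foundedness of $<_a$; it only requires $<_a$ to be stable under substitutions (it is an arbitrary syntactic ordering chosen by the prover, e.g.\ an rpo in the example). Consequently an infinite $<_{\pi}$-descending chain is not by itself a contradiction. The paper's proof uses the chain differently: a strict decrease $l <_a l'$ between corresponding IAAs of consecutive sequent instances is only possible, by Definition~\ref{def:trace} and Remark~\ref{rem:trace}, if the trace connecting them passes through a progress point (otherwise the IAAs are syntactically equal, since $(Subst)$-steps are stuttering in the cumulative list and $(Gen)$-steps reduce to $(Wk)$-steps by Lemma~\ref{lem:cumulative}). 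This forces the existence of an infinitely progressing trace along the infinite path, and the actual \emph{Descente Infinie} then takes place on the ordinal stages of the approximant sequence in the standard model, as in~\cite{Brotherston:2011fk}. Omitting this trace/progress-point argument is not a shortcut you can afford: it is the step that converts the syntactic ordering constraints into a semantic contradiction.

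A second, smaller gap is the substitution bookkeeping you defer in your last paragraph. Reconciling the $(Subst)$-substitution $\sigma_i$ realising the back-link with the cumulative substitution $\theta^c_{i+1}$ of the next rb-path is not achieved by stability and transitivity alone: the paper needs the additional argument that the false instances of $S(N^h)[\delta_h]$ and $S(N^h)[\theta^c_1]$ must overlap (otherwise one could assemble a finite bud-free derivation of $S(N^h)[\delta_h]$ from sound rules, making it true), which yields the unifying substitutions $\epsilon$ and $\tau$ with $S(N^h)[\delta_h\epsilon]\equiv S(N^h)[\theta^c_1\tau]$ false. Without this ``shared false instance'' step the chain you want to build does not connect at the bud--companion junctions.
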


\begin{proof}
Let $M$ be a standard model for $(\Sigma,\Phi)$ and assume a
  normalised pre-proof tree-set. Let also $\cal P$ denote its digraph whose non-singleton SCCs discharge
  their IHs. By contradiction, we assume that there exists a root node $N$ such
  that $S(N)$ is false.  
  We define a partial (well-founded)
  ordering $<_{\cal R}$ over the (finite number of) root nodes from
  $\cal P$ such that, for every two distinct root nodes $N^1$ and $N^2$, we
  have $N^1 <_{\cal R} N^2$ if i) $N^1$ and $N^2$ are not in the same
  SCC, and ii) $N^1$ can be joined from $N^2$
  in $\cal P$.

  By induction on $<_{\cal R}$, we consider the base case when $N$ is a
  $<_{\cal R}$-minimal node. (The step case, when $N$ is not a
  $<_{\cal R}$-minimal node, will not be detailed since it can be
  treated similarly by assuming that all $<_{\cal R}$-smaller root
  nodes are labelled by true sequents.) If $N$ is included in a
  one-node SCC, $N$ is also a leaf node. The only 0-premise rules are
  the LK's
  $(Ax)$ rule 
 as well as $(R.)$
when unfolding with unconditional axioms. In both cases, $S(N)$ is true
which leads to a contradiction.

Let us now assume that $N$ is a $<_{\cal R}$-minimal node from some
non-singleton SCC $\pi$. We will analyse all possible scenarios and show that each of
them leads to a contradiction.
The tree $t$ from $\cal P$ and rooted by $N$ should have buds labelled
by false sequents, otherwise $S(N)$ would be true. Let $B$ be such a
bud such that $N^h$ is its companion and $[N,\ldots,H,B]$ is an
rb-path in $\pi$. $N^h$ should be a root node from $\pi$ because $N$
is $<_{\cal R}$-minimal; it is labelled by the false
sequent $S(B)$. 
Since the 
CLKID$^\omega_N$ rules are sound, by  Lemma~\ref{lem:cumulative}, we
conclude that the
\emph{cumulative instance} $S(N)[\theta_c]$ is false, where  $\theta_c$ is the
cumulative substitution for $[N,\ldots,H,B]$. $\pi$ discharges
its IHs, so we have that
$S(B)[\delta_h](\equiv S(H))$ is $<_{\pi}$-derivable
from $S(N)[\theta_c]$, where $\delta_h$ is the substitution used by
the $(Subst)$-step whose conclusion is $S(H)$. 
By Lemma~\ref{lem:derivable}, we
have that $A_{S(N^h)[\delta_h]} <_{\pi} A_{S(N)[\theta_c]}$. 

We perform a similar reasoning on $N^h$ as for $N$. There is an rb-path
$[N^h,\ldots, H', {N^f}']$ such that the companion of ${N^f}'$ (in $\pi$) is ${N^h}'$
and $S(N^h)[\delta_h]$ shares false instances with
$S(N^h)[\theta^c_1]$, where $\theta^c_1$ is the cumulative
substitution annotating 
$[N^h,\ldots, H', {N^f}']$. By contradiction, we assume that no false
instance of $S(N^h)[\delta_h]$ is shared. Then, one can
build a finite bud-free pre-proof tree  of $S(N^h)[\delta_h]$, by using
only sound rules. Hence,
$S(N^h)[\delta_h]$ is true, so contradiction.
Therefore, there are
two substitutions $\epsilon$ and $\tau$ such that
$S(N^h)[\delta_h\epsilon]\equiv S(N^h)[\theta^c_1\tau]$ and
$S(N^h)[\theta^c_1\tau]$ is false. Let $S({N^h}')[\delta_h'] (\equiv S(H'))$ be the
instance of $S({N^h}')$ used as IH. Since it is discharged
by $\pi$, we have that
$A_{S(N^h)[\theta^c_1]}>_{\pi}A_{S({N^h}')[\delta_h']}$. From $A_{S(N)[\theta^c]}
>_{\pi} A_{S(N^h)[\delta_h]}$ and the previous ordering
constraint, we get
$A_{S(N)[\theta^c\epsilon]} >_{\pi} A_{S(N^h)[\delta_h\epsilon]}$ and
$A_{S(N^h)[\theta^c_1\tau]}>_{\pi}A_{S({N^h}')[\delta_h'\tau]}$, by the `stability
under substitutions' property of $<_{\pi}$. Hence,
\[A_{S(N)[\theta^c\epsilon]} >_{\pi} A_{(S(N^h)[\delta_h\epsilon]}
  ~\equiv)~A_{S(N^h)[\theta^c_1\tau]}>_{\pi}A_{S({N^h}')[\delta_h'\tau]}\]
For similar reasons as given for $S(N^h)[\delta_h]$, we can show that
$S({N^h}')[\delta_h'\tau]$ is false, hence it can be treated similarly
as $S(N^h)[\delta_h]$. And so on, the process
can be repeated to produce an  infinite strictly
$<_{\pi}$-decreasing sequence $s$ of measure values associated to instances of sequents labelling
root nodes from $\pi$, of the form
\[A_{S(N)[\theta^c\epsilon\cdots]} >_{\pi}
  A_{S(N^h)[\theta^c_1\tau\cdots]}>_{\pi}A_{S({N^h}')[\cdots]}>_{\pi}\ldots\hspace{1cm}\]

We can associate to $s$ the infinite admissible list $l_s$ of its sequents $[S(N)[\theta^c\epsilon\cdots], S(N^h)[\theta^c_1\tau\cdots], S({N^h}')[\cdots],\ldots]$
and define the path $p$ \emph{underlying} $l_s$ as the concatenation of the rb-paths from $\pi$ that  built $s$, i.e., $[N,\ldots,B,N^h,\ldots,
N^{f'}, \ldots]$. By the construction of $s$, every successive $(Subst)$-, bud and
root nodes in $p$ are labelled by the same sequent instance in $l_s$, so the
$(Subst)$-steps are stuttering in $l_s$. By
Lemma~\ref{def:cumulative}, all $(Gen)$- can be replaced by $(Wk)$-steps. $p$ is
  of the form $[N_{\infty}\ldots,N_1,\ldots, N_0]$ where $N_0$, $N_1$, \ldots,
  $N_{\infty}$ are an infinite number of \emph{all} the occurrences of
  $N$ in $p$.

  We will show that there is a trace
  following $p$ that has an infinite number of progress points. As
  explained in~\cite{Brotherston:2011fk}, it means that there is
  an infinite strictly decreasing sequence of ordinals, hence contradiction. Since
  $p$ is the concatenation of rb-paths in $\pi$ and $\pi$ discharges
  its IHs, for each such rb-path the bud sequent is
  $<_{\pi}$-derivable from the cumulative instance, along the rp-path, of the root
  sequent. By Lemma~\ref{lem:derivable-stability}, there is an
  instance $S(N_{\infty})[\theta_{\infty}]$ such that $S(N_0)$ is
  $<_{\pi}$-derivable from it along $p$, where $\theta_{\infty}$ is
  the composition of all cumulative substitutions of the rb-paths from
  $l$. For any two consecutive nodes $N_i$ and $N_{i-1}$
  ($i\in[1..\infty]$), we have that $S(N_{i-1})[\theta_{i-1}]$ is
  $<_{\pi}$-derivable from $S(N_{i})[\theta_{i}]$, where $\theta_i$
  (resp., $\theta_{i-1}$) are the compositions of all cumulative
  substitutions of the rb-paths along $[N_i,\ldots, N_0]$ (resp.,
  $[N_{i-1},\ldots, N_0]$).

  Let us denote by $S$ (resp, $S'$) the sequent $S(N_{i})[\theta_{i}]$
  (resp., $S(N_{i-1})[\theta_{i-1}]$), for some $i\in[1..\infty]$.  By
  Definition~\ref{def:derivabillity} and the transitivity of the
  $<_{\pi}$-derivability relation, for each IAA $l$ from $A_S$ there
  is an IAA $l'$ from $A_{S'}$ such that $l$ derives from
  $l'$. Therefore, there are $n$ traces along the path $p'$
  $[N_{\infty},\ldots,N_i]$, where $n$ is the number of IAAs from $S$.

  We will show that the traces along $p'$ have an infinite number of
  progress points. By contradiction, we assume that this number is
  finite. Therefore, there is a subpath $p''$ of $p$ whose traces have
  no progress points and there exists $j\in[1..\infty]$ such that
  $N_j$ and $N_{j-1}$ belong to $p''$. Let us denote by $S_j$ (resp,
  $S_{j-1}$) the sequent $S(N_{j})[\theta_{j}]$ (resp.,
  $S(N_{j-1})[\theta_{j-1}]$). Since $S_{j-1}$ is $<_{\pi}$-derivable
  from $S_j$, we have that $A_{S_{j-1}}<_{\pi}A_{S_{j}}$. By the definition of
  $<_{\pi}$ as a multiset extension of the ordering $<_a$ over the
  instances of IAAs from the root sequents in $\pi$, there should be
  an IAA $l\in A_{S_{j-1}}$ for which there is another IAA
  $l\in A_{S_j}$ such that $l<_a l'$ and $l$ derives from $l'$, i.e.,
  $l$ and $l'$ are from an infinite trace $t$ following (a subpath of)
  $p''$ which has no progress points. According to the definition of a
  trace (see Definition~\ref{def:trace}) and the way $l_s$ was built,
  $l <_a l'$ is possible only if the subtrace of $t$ from $l'$ to $l$
  has at least one progress point, so contradiction. Otherwise,
  $l\equiv l'$ since i) $l_s$
  is admissible, ii) the $(Subst)$-steps are stuttering, iii) the
  $(Gen)$-steps can be replaced by $(Wk)$-steps, and iv) the
  instantiation steps that built $s$ preserve the
  equality relations.
\end{proof}

\begin{example}
  For the sequents labelling the nodes from the digraph given in
  Example~\ref{ex:normalf}, we define the measure values $A_{Nt\vdash R(t,0)}=\{Nt\}$,
  $\forall t$, and $A_{Nt_1,Nt_2\vdash R(t_1,t_2)}=\{Nt_2\}$,
  $\forall t_1, t_2$. The IH $S(N^{13})$ is $<_{\pi}$-derivable from
  $S(N^{10})[{\{x'\mapsto sx''\}}]$, hence discharged by the SCC $\pi$ using the trace
  $[Nx',Nx'',Nx'']$, if
  $\{Nx''\} <_{\pi} \{Nsx''\}$. Also, the IH $S(N^{7})$ is
  $<_{\pi'}$-derivable from
  $S(N^{1})[{\{x\mapsto sx'; y\mapsto sy'\}}]$ in the SCC $\pi'$ using the
  trace $[Ny,Ny,Ny',Ny',Ny', Ny']$ if
    $\{Ny'\} <_{\pi'} \{Nsy'\}$. The ordering constraints hold if
    $<_{\pi}$ and $<_{\pi'}$ are the  multiset extensions of a
    recursive path ordering~\cite{Baader:1998ve} $<_{rpo}$ for which $z
    <_{rpo}sz$, for every variable $z$.

    By Theorem~\ref{thm:soundness},  the root sequents
    in the pre-proof tree-set, $S(N^1)$ and $S(N^{10})$, are true.
\end{example}

\noindent\textbf{Comparison with the soundness checking criterion from~\cite{Stratulat:2017ac}.}
In~\cite{Stratulat:2017ac}, the ordering-derivability constraints
issued when analysing if a pre-proof tree-set is a proof are
defined at the level of the minimal cycles of its digraph, referred to as $n$-cycles. A
\emph{$n$-cycle} is defined as a finite circular list
$[N_1^1,\ldots,N_1^{p_1}],\ldots,$ $[N_n^1,\ldots,N_n^{p_n}]$ of $n$
($>0$) paths leading root nodes to buds such that
$N^1_{next(i)}={\cal MR}(N^{p_i}_i)$, for any $i\in[1..n]$, where
$next(i) = 1 + (i\mbox{ mod }n)$.

Let $\pi$ be a non-singleton SCC and $C$ an \emph{$n$-cycle}
$[N_1^1,\ldots,N_1^{p_1}],\ldots,$ $[N_n^1,\ldots,N_n^{p_n}]$ from
$\pi$. The induction hypotheses are defined at the $n$-cycle
level. For all $i\in[1..n]$, let $\theta_i^c$ be the cumulative
substitution annotating $[N_i^1,\ldots,N_i^{f}]$, where the
\emph{IH-node} $N^{f}_{i}$ is either i) $N_i^{p_i}$ if $(Subst)$ is
not applied along $[N_i^1,\ldots,N_i^{p_i}]$, or ii) $N_i^{p_i-1}$,
otherwise. The sequents labelling the IH-nodes correspond exactly to
the induction hypotheses used in the paper. We say that the IHs
$S(N_j^f)$ ($j\in[1..n]$) are \emph{discharged} by $C$ if,
$\forall i\in[1..n]$, $S(N^{f}_{i})$ is $<_{\pi}$-derivable from
$S(N_i^1)[\theta_i^c]$ along
$[N_i^1,\ldots,N_i^{p_i}]$. In~\cite{Stratulat:2017ac}, a \emph{proof}
is every pre-proof tree-set whose digraph has only $n$-cycles that
discharge their IHs and it has been shown that its root sequents  are true.

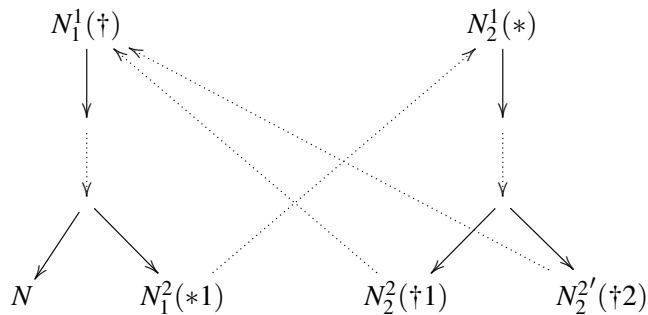
\begin{figure}[!h]
\centering
\begin{displaymath}
    \xymatrix @C=0.1pc {
      & N_1^1(\dag) \ar@{->}[d] & & ~~~~~~~~~~~~~~& & N_2^1 (*)\ar@{->}[d]  &\\
     & \ar@{.>}[d] & & & &  \ar@{.>}[d] &\\
       & \ar@{->}[dr]\ar@{->}[dl] & & & & \ar@{->}[dr] \ar@{->}[dl]&\\
    N &  & N_1^2 (*1)\ar@{.>}[uuurrr]  & &  N_2^2 \ar@{.>}[uuulll] (\dag 1)
    &  &{N_2^2}' (\dag 2)\ar@{.>}[uuulllll]
 }
\end{displaymath}
\caption{\label{fig:cycle} Two $2$-cycles sharing the same path.}
\end{figure}

Since several $n$-cycles may share the \emph{same} root-bud path, some
ordering-derivability constraints may be duplicated when checking that
a pre-proof is a proof. For example, the path $[N_1^1,\ldots, N_1^2]$
is shared between the two $2$-cycles
$[N_1^1,\ldots, N_1^2][N_2^1,\ldots, N_2^2]$ and
$[N_1^1,\ldots, N_1^2][N_2^1,\ldots, {N_2^2}']$ from the digraph given
in Figure~\ref{fig:cycle}.  Even if the number of $n$-cycles from a
digraph can be large, as explained in the introduction, the number of
\emph{distinct} ordering-derivability constraints is always smaller or
equal than the number of buds from the non-singleton SCCs. With the
approach from~\cite{Stratulat:2017ac}, the duplicates of the
constraints do not need to be again processed if the already processed
constraints are recorded.  It has been shown that the time complexity
of the soundness checking procedure is polynomial if the number of the
ordering-derivability constraints is that of the buds from the
non-singleton SCCs. With our new approach, the number of operations
for normalising a CLKID$^{\omega}_N$ pre-proof of $n$ nodes is given
by the sum of non-root companions, non-terminal $(Subst)$-nodes and
nodes labelled by some sequent that is the premise of a rule $r$
different from $(Subst)$. So, it is smaller than 3$n$. Let $c$ be the
maximal cost of an operation, including the node
duplication and the creation of a $(Subst)$-node or bud-companion
relation. Their total cost is smaller than 4$nc$ (the second operation
duplicates it twice). The costs for annotating substitutions and for
evaluating an ordering-derivability constraint are given
in~\cite{Stratulat:2017ac}.

\section{Conclusions and future work}

We have defined a more efficient soundness
criterion for a class of CLKID$^{\omega}$ pre-proofs considered
in~\cite{Stratulat:2017ac}, by building a set of non-redundant
ordering-derivability constraints. We have shown that these
constraints can also be extracted from those that define the soundness
criterion from~\cite{Stratulat:2017ac}, by deleting the duplicated
values. The new normal forms and their digraphs allow to uniformly
represent rb-paths and can be built in linear time. We conclude that
the two soundness checking criteria have the same polynomial-time
complexity if the time complexity for comparing two IAAs is at most
polynomial.

In the future, we plan to adapt our approach to make more effective
other soundness criteria based on minimal cycles, e.g., those
involving cyclic formula-based Noetherian induction
reasoning~\cite{Stratulat:2012uq,Stratulat:2017aa}, and other systems
where the soundness can be checked by the global trace condition, as
CLJID$^{\omega}$~\cite{Berardi:2018aa}.

\section*{Acknowledgements}

The author thanks the anonymous reviewers for their comments that
helped to improve the quality of the paper.

 \end{document}